\definecolor{NearBlack}{rgb}{0.1, 0.1, 0.6}
\newtheorem{theorem}{Theorem}[section]
\newtheorem{lemma}[theorem]{Lemma}
\newtheorem{claim}[theorem]{Claim}
\newtheorem{corollary}[theorem]{Corollary}
\newtheorem*{lemma*}{Lemma}
\title{Sorting with constraints}
\author{Archit Manas}
\date{\vspace{-5ex}}
\begin{document}

\maketitle
\begin{abstract}
    In this work, we study the generalized sorting problem, where we are given a set of $n$ elements to be sorted, but only a subset of all possible pairwise element comparisons is allowed. We look at the problem from the perspective of the graph formed by the ``forbidden'' pairs, and we parametrise algorithms using the clique number and the chromatic number of this graph. We also extend these results to the class of problems where the input graph is not necessarily sortable, and one is only interested in discovering the partial order. We use our results to develop a simple algorithm that always determines the underlying partial order in $O(n^{3/2} \log n)$ probes, when the input graph is an Erdős–Rényi graph.  
\end{abstract}
\section{Introduction}
Comparison-based sorting has been extensively studied in computer science, and its complexity is well understood. However, in many practical settings, the cost of comparing a pair of objects can be non-uniform. 
 Non-uniform cost models can render even basic data processing tasks, such as finding the minimum or searching a key, surprisingly complex (see e.g. \cite{DBLP:journals/jcss/CharikarFGKRS02}). This motivates the exploration of structured comparison cost models. One such model is the monotone cost model: each element has an inherent numeric value, and the comparison costs are monotone functions of these values (see \cite{DBLP:conf/focs/GuptaK01}, \cite{DBLP:conf/soda/KannanK03}). A different line of work assumes that  comparison costs satisfy the triangle inequality, and \cite{DBLP:conf/approx/GuptaK05} gave poly-log competitive algorithms for sorting and max-finding under this framework. In the non-uniform setting, restricted models with a small number of distinct comparison costs are also of interest. 
 %Considered open in \cite{DBLP:journals/jcss/CharikarFGKRS02}
 Recently, \cite{DBLP:conf/icalp/JiangWZ024} made significant progress for such settings. They have an $\widetilde{O}{\left(n^{1-1/(2W)}\right)}$-competitive algorithm when the input has $W$ distinct comparison costs.  A special case of this model, known as the $1$-$\infty$ cost model, occurs when certain pairs are \textit{forbidden} from being compared, while all other pairs incur a unit cost. This work focuses on this model, termed the generalized sorting problem, along with some related variants.\\\\
The input to the generalized sorting problem is an undirected graph $G(V, E)$ on $n = |V|$ vertices with $m = |E|$ edges, where an edge $(u, v) \in E$ indicates that the vertices $u$ and $v$ are comparable (these are the pairs with unit-cost for comparison). By probing the edge, we reveal a directed edge $(u, v)$ or $(v, u)$ depending on $u \prec v$ or $v \prec u$. Let $\vv{G}(V, \vv{E})$ be the underlying directed graph. Given the promise that $\vv{G}$ is an acylic graph with a Hamiltonian path, the objective is to find this path by adaptively probing the smallest number of edges in $E$. We let \textsc{probe}($u, v$) denote the oracle function (over edges $(u, v) \in E$) that returns $1$ if $u \prec v$ and $0$ otherwise. \\\\
When $G = K_n$, this becomes \textit{regular comparison-based sorting}, and it is well-known that $\Theta(n \log n)$ probes are both necessary and sufficient. When $G = K_{\frac{n}{2}, \frac{n}{2}}$, this becomes the \textit{nuts and bolts} sorting problem. Here too, it is known that $\Theta(n \log n)$ probes are both necessary and sufficient, as shown in \cite{DBLP:conf/soda/AlonBFKNO94}. In a breakthrough in \cite{DBLP:conf/focs/HuangKK11}, an algorithm with probe complexity $\widetilde{O}{(n^{1.5})}$ was presented for worst-case generalized sorting, which was later improved to $\widetilde{O}{\left(\sqrt{nm}\right)}$ in \cite{DBLP:conf/focs/KuszmaulN21}. This remains state-of-the-art for arbitrary $G$.\\\\A closely related problem is one where the guarantee of $\vv{G}$ being Hamiltonian is removed, and the objective is to determine the direction of all edges $\in E$ by adaptively probing as few of them as possible. Yet another version is the so-called Generalized Poset Sorting Problem (GPS), in which along with $G(V, E)$ we are additionally given an \textit{unknown poset} $\mathcal{P}(V, \prec)$, where each probe $(u, v)$ returns the relation between $u, v$, that is, $u \prec v, v \prec u$ or $u \nsim v$. This has been studied in \cite{DBLP:conf/icalp/JiangWZ024}.\\\\ 
As mentioned above, we term the edges $(u, v)$ not present in $G$ as \textit{forbidden}, denoting that these pairs of vertices cannot be compared. Let $H$ denote the (undirected) graph on $V$ formed by the forbidden edges. It is natural to expect that if $|H|$ is small, one can find the total order with a few probes. This is indeed the case, as shown in \cite{DBLP:conf/swat/BanerjeeR16}, where the path is found in $O((|H|+n) \log n)$ probes. In fact, the algorithm works even if the guarantee that $\vv{G}$ is Hamiltonian is removed.
\section{Preliminaries}
\label{sec:2}
In this work, we obtain algorithms for the generalized sorting problem, as well as in the scenario when the promise of a total order is removed, through two metrics of $H$ that can estimate how ``locally" small the latter is. In particular, we show the following: \begin{enumerate}
\item Let $\chi(H)$ denote the chromatic number of $H$, that is, the least positive integer $d$ so that there exists a function $f: V \mapsto [d]$ with $f(x) \ne f(y)$ for all $(x, y) \in H$. Then, $O(n \log n + n \chi(H))$ probes are sufficient to determine all edges of $\vv{G}$. We call this algorithm \hyperref[ColorSolve]{\textsc{ColorSolve}}.
\item Let $\omega(H)$ denote the clique number of $H$, that is, the size of the largest complete subgraph in $H$. Then, $O(n \omega(H) \log n)$ probes are sufficient to determine all the edges of $\vv{G}$. We call this algorithm \hyperref[CliqueSolve]{\textsc{CliqueSolve}}.
\end{enumerate}
Both the above algorithms work even without the guarantee of $\vv{G}$ containing a hamiltonian path. Due to $\omega(H) \leq \chi(H)$, it can be seen that \hyperref[CliqueSolve]{\textsc{CliqueSolve}} is more optimal (up to logarithmic factors). Although the time complexities of the algorithms are not of immediate concern to us, it can be seen that 
\begin{itemize}
\item Given the coloring of $H$ in $\chi(H)$ colors, $\hyperref[ColorSolve]{\textsc{ColorSolve}}$ runs in polynomial time.
\item \hyperref[CliqueSolve]{\textsc{CliqueSolve}} runs in polynomial time. A way to speed up the algorithm presented is briefly described at the end of \hyperref[sec:4:4]{Section 4.4}, after an analysis of the algorithm's probe complexity.
\end{itemize} \hyperref[sec:3]{Section 3} and \hyperref[sec:4]{Section 4} are dedicated to proving these results. \\\\
In \hyperref[sec:5]{Section 5}, we show that \hyperref[CliqueSolve]{\textsc{CliqueSolve}} can be used to design a algorithm that correctly sorts $G$ while using $O(n \log^2 n)$ probes with high probability, when $G$ is generated randomly with edge probability $p = \Omega(1)$. When $p$ is arbitrary, we design an algorithm that makes $\tilde{O}(n^{3/2})$ probes, depending on whether $p$ exceeds the threshold $\frac{\log n}{\sqrt{n}}$ or not. It was also shown in \cite{DBLP:conf/swat/BanerjeeR16} that $\tilde{O}(n^{3/2})$ comparisons suffice for general $G(n, p)$, but we mention our algorithm since the techniques we use are different. \\\\
As remarked earlier, these algorithms do not require the guarantee of the input graph having a hamiltonian path. When the existence of a hamiltonian path is guaranteed, i.e., in the so called \textit{stochastic generalised sorting problem}, where edges other than the hamiltonian path are added with probability $p$, an algorithm with an expected $O(n \log(np))$ of comparisons was presented in \cite{DBLP:conf/focs/KuszmaulN21}. It may be noted that our algorithm is deterministic, with the only randomness involved in the generation of the graph $G(n, p)$.
\section{Through $\chi(H)$}
\label{sec:3}
\subsection{High-level description}
Given the coloring $f:V \mapsto [k]$, we partition the vertices into $k$ color classes $S_1, \dots, S_k$. Each color class can be fully sorted using regular \textsc{MergeSort}, resulting in $k$ ``chains''. Next, to determine the relation of the chains with each other, we determine edges across two chains, which is achieved using a two-pointered algorithm \hyperref[AddEdges]{\textsc{AddEdges}}. \\\\
The idea is to use \hyperref[AddEdges]{\textsc{AddEdges}} to include a small number of edges by using few probes to ensure that the transitive closure of the created graph captures all the information from $\vv{G}$.
\subsection{The Algorithm}
Given the coloring $f: V \mapsto [k]$, we first construct the sets $S_1, S_2, \dots, S_k$ as \[S_i = \{v \in V \mid f(v) = i\}\]
It may be noted that $V = S_1 \sqcup S_2 \dots \sqcup S_k$. We call procedures \textsc{MergeSort}($S_i$) for all $i \in [k]$. We can represent the sorted orders obtained using the following chains \begin{align*}
    S_1 := a^{(1)}_1 \prec a^{(1)}_2 \prec &\dots \prec a^{(1)}_{|S_1|} \\
    S_2 := a^{(2)}_1 \prec a^{(2)}_2 \prec &\dots \prec a^{(2)}_{|S_2|} \\
    &\vdots \\
    S_k := a^{(k)}_1 \prec a^{(k)}_2 \prec &\dots \prec a^{(k)}_{|S_k|} \\
\end{align*}
where $a^{(i)}_j$ denotes the element of $S_i$ with rank $j$. \\\\ We now define a graph $A$ with edges $a^{(i)}_j \mapsto a^{(i)}_{j+1}$ for all valid pairs $(i, j)$. For each ordered pair of chains $(i, j)$ with $i \ne j$, we devise an algorithm \hyperref[AddEdges]{\textsc{AddEdges}} to add edges to $A$. It should be noted that this algorithm is \textit{not} symmetrical, and will be performed on all ordered pairs. \\\\ Here is the pseudocode for \hyperref[AddEdges]{\textsc{AddEdges}}$(i, j)$.
\begin{center}
\begin{algorithm}[H]
\label{AddEdges}
\caption{\hyperref[AddEdges]{\textsc{AddEdges}}($i, j$)}
\SetAlgoLined 
$R \gets |S_i|+1$\;
\For{$L \gets |S_j|$ \KwTo $1$}{
    $X \gets \left\{1 \leq x < R : \left(a^{(j)}_L, a^{(i)}_x \right) \in E\right\}$ \tcp{collect comparable elements in $a^{(i)}$ preceding $a^{(i)}_R$}
    \For{$x \in X$ in decreasing order}{
        \If{\textsc{probe}$\left(a^{(j)}_L, a^{(i)}_x\right) = 1$}{
            \textbf{break}\;
        }
        $R \gets x$\;
    }
    \If{$R \leq |S_i|$}{
        Add edge $\left(a^{(j)}_L, a_{R}^{(i)}\right)$ to $A$\;
    }
}
\end{algorithm}
\end{center}
\noindent
We have the chains \begin{align*}
    a^{(i)}_1 \prec &\dots \prec a^{(i)}_{|S_i|} \\
    a^{(j)}_1 \prec &\dots \prec a^{(j)}_{|S_j|}
    \end{align*}
We initialise two pointers, $R = |S_i|+1$ for the first chain, and process $L$ in decreasing order from $|S_j|$ to $1$. For each $L$, we look at the neighbours (with respect to $G(V, E)$) of $a^{(j)}_L$ in $S_i$ that are strictly before $a^{(i)}_R$. \\\\
We go one by one over these from right to left (using a variable $x$), and probe the directions of the edges $\left(a^{(j)}_L, a^{(i)}_x\right)$ until we find $a^{(i)}_x \prec a^{(j)}_L$ for the first time. In such a case, we add the edge $a^{(j)}_L \mapsto a^{(i)}_{x'}$, where $x'$ is the immediate previous value of $x$. Finally, we update $R$ to $x'$. These two steps are conveniently achieved using lines $8$ and $11$ in the pseudocode. This concludes the description of \hyperref[AddEdges]{\textsc{AddEdges}}.\\\\
Once we have called \hyperref[AddEdges]{\textsc{AddEdges}} over all pairs $(i, j)$ with $i \ne j$, we obtain a graph $A$ which consists of edges $\left(a^{(i)}_j, a^{(i)}_{j+1}\right)$ for all valid $i, j$ and those edges added in $\hyperref[AddEdges]{\textsc{AddEdges}}$. As is proved later, the edges of $\vv{G}$ can be determined from the transitive closure of $A$, that is, an edge $(u, v) \in E$ is oriented as $u \mapsto v$ in $\vv{E}$ if and only if $v$ is reachable from $u$ using the edges of $A$. \\\\
Thus, by using a series of depth first searches in $A$, we can determine $\vv{G}$ entirely without any further probes. Our final algorithm, \hyperref[ColorSolve]{\textsc{ColorSolve}} thus has the pseudocode: 
\begin{center}
\begin{algorithm}[H]
    \label{ColorSolve}
    \SetAlgoLined
    \caption{{\hyperref[ColorSolve]{\textsc{ColorSolve}}}}
    \KwIn{$G(V, E), f, k$}
    $A \gets$ empty graph on $V$\;
    \For{$i \gets 1$ \KwTo $k$}{
        $S_i \gets \{v \in V \mid f(v) = i\}$\;
        $a^{(i)} \gets \textsc{MergeSort}(S_i)$\;
        \For{$j \gets 1$ \KwTo $|S_i| - 1$}{
            Add edge $\left(a^{(i)}_j, a^{(i)}_{j+1}\right)$ to $A$ \tcp{Add the ``chain'' edges to $A$}
        }
    }
    \For{$i \gets 1$ \KwTo $k$}{
        \For{$j \gets 1$ \KwTo $k$}{
            \If{$i \neq j$}{
                \hyperref[AddEdges]{\textsc{AddEdges}}($i, j$)\;
            }
        }
    }
    $\vv{K} \gets$ empty graph on $V$\;
    \ForEach{$(u, v) \in E$}{
        \If{$\exists$ path from $u$ to $v$ in $A$}{
            Add edge $(u, v)$ to $\vv{K}$ \tcp{This may be efficiently checked using a DFS}
        }
    }
    $\vv{K}$ is the same as the underlying graph $\vv{G}$, and it has now been found.
\end{algorithm}
\end{center}
\subsection{Correctness}
In this section, we establish that \hyperref[ColorSolve]{\textsc{ColorSolve}} correctly determines $\vv{G}$. \\\\
First, we note that all edges added in $A$ are present in the underlying graph $\vv{G}$ as well. Indeed, this is true of edges of the form $\left(a^{(i)}_j, a^{(i)}_{j+1}\right)$ added through line $6$ of \hyperref[ColorSolve]{\textsc{ColorSolve}}. It is also true of edges $\left(a^{(j)}_L, a^{(i)}_R\right)$ added through line $11$ in \hyperref[AddEdges]{\textsc{AddEdges}}, since if $R \leq |S_i|$, it means that $\textsc{probe}(a^{(j)}_L, a^{(i)}_x) = 1$ was confirmed at some point. As a result, our graph $A$ is a subgraph of $\vv{G}$. \\\\
Thus, to show correctness, it suffices to show that if $(u, v)$ in $\vv{G}$ for some edge $(u, v) \in G$, then there is a path from $u$ to $v$ in $A$ (the other direction has been taken care of). 
\begin{lemma}
\label{lem:3:1}
If $(u, v)$ is an edge in $\vv{G}$, there is a path from $u$ to $v$ in $A$.
\end{lemma}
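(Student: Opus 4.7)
The plan is to split on whether $u$ and $v$ lie in the same color class. If they do, say $u = a^{(i)}_p$ and $v = a^{(i)}_q$ with $p < q$ (forced by $u \prec v$ and the correctness of \textsc{MergeSort}), then the chain edges $a^{(i)}_p \to a^{(i)}_{p+1} \to \cdots \to a^{(i)}_q$ added on line 6 of \hyperref[ColorSolve]{\textsc{ColorSolve}} give the path immediately.

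The substantive case is $u \in S_I$ and $v \in S_{II}$ with $I \neq II$; write $u = a^{(I)}_p$ and $v = a^{(II)}_q$. I would focus on the single call \hyperref[AddEdges]{\textsc{AddEdges}}$(II, I)$, in which the outer index $L$ runs over indices of $S_I$ and the inner pointer $R$ runs over indices of $S_{II}$; this is precisely the call that can ever add edges from $S_I$ into $S_{II}$. Let $R_L$ denote the value of $R$ at the end of the outer iteration indexed by $L$, with $R_{|S_I|+1} = |S_{II}|+1$ as the initial value. The main invariant I would prove, by downward induction on $L$, is: whenever $R_L \leq |S_{II}|$, there is a directed path from $a^{(I)}_L$ to $a^{(II)}_{R_L}$ in $A$. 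The inductive step splits on whether the inner loop actually moved $R$ during iteration $L$; if it did, the edge $(a^{(I)}_L, a^{(II)}_{R_L})$ was explicitly added, and if it did not, then $R_L = R_{L+1}$ and one prepends the chain edge $a^{(I)}_L \to a^{(I)}_{L+1}$ to the path supplied by the induction hypothesis at $L+1$.

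The second piece, which I expect to be the main delicate step, is showing $R_p \leq q$ after iteration $L = p$. Since $R$ is monotonically non-increasing over the outer loop (the update $R \gets x$ happens only for $x < R$), the only case to rule out is $R > q$ at the start of iteration $p$. In that case $q \in X$ because $(u,v) \in E$, and as the inner loop walks $X$ in decreasing order, every index $x \geq q$ in $X$ satisfies $u \prec v \preceq a^{(II)}_x$; each such probe confirms the ``continue'' branch and drives $R$ down to at most $q$ by the time $x = q$ is processed. With both ingredients in hand, the invariant supplies a path $u = a^{(I)}_p \to \cdots \to a^{(II)}_{R_p}$ in $A$, and the chain edges of $S_{II}$ extend it to $a^{(II)}_q = v$, completing the argument. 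The one thing to be careful about is that the ``immediate previous value of $x$'' convention in \hyperref[AddEdges]{\textsc{AddEdges}} actually produces edges of the form $(a^{(I)}_L, a^{(II)}_{R_L})$ that lie in $\vv{G}$, so that the first case of the inductive step is honest.
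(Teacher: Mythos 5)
Your argument is correct and is essentially the paper's own proof of this lemma: the same case split on whether $u$ and $v$ share a color, the same key step showing that the pointer $R$ in \hyperref[AddEdges]{\textsc{AddEdges}} ends iteration $L=p$ at a value $R_p\le q$ (because every comparable index $x\ge q$ probes as $u\prec a^{(II)}_x$ and keeps driving $R$ down), and the same completion by chain edges in $S_{II}$. Your explicit downward induction on $L$ is only a more elaborate packaging of what the paper uses directly --- line 11 adds the edge $\left(a^{(I)}_p, a^{(II)}_{R_p}\right)$ whenever $R_p\le|S_{II}|$, so a single edge (rather than a detour through chain edges of $S_I$) already suffices --- and, like the paper's own proof, your reading of the inner loop (continue and update $R$ on a confirmation that $u$ precedes $a^{(II)}_x$) follows the prose description of \hyperref[AddEdges]{\textsc{AddEdges}} rather than the literal break condition printed in the pseudocode.
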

\begin{proof}
There are two cases to consider, depending on whether $f(u) = f(v)$ or not. \begin{enumerate}
    \item [(a)] If $f(u) = f(v) = i$, then it is clear that $u$ precedes $v$ in the sorted order $a^{(i)}$, and there is indeed a path from $u$ to $v$ in $A$ by simply following edges of the path \[a^{(i)}_1 \mapsto a^{(i)}_2 \mapsto \dots \mapsto a^{(i)}_{|S_i|}.\]
    \item [(b)] Suppose $j = f(u)$, $i = f(v)$, $u = a^{(j)}_s$ and $v = a^{(i)}_t$. Suppose we tried to add edge $a^{(j)}_s \mapsto a^{(i)}_T$ to $A$ through line $11$ of process \hyperref[AddEdges]{\textsc{AddEdges}}($i, j$). If no edge was added, that is, $R = |S_i|+1$ at the time, it means that $a^{(j)}_s $ was more than all the elements in $S_i$ that it could be compared with, which is impossible since $a^{(j)}_s \prec a^{(i)}_t \in S_i$. \\\\
    Thus some edge $\left(a^{(j)}_{s}, a^{(i)}_{T} \right)$ was indeed added to $T$. Now, we claim that $T \leq t$. Indeed, if this were not the case, then $t < T$. Since the edge $\left(a^{(j)}_s, a^{(i)}_{T}\right)$ was added, this means that $a^{(j)}_s \succ a^{(i)}_x$, where $x$ is the least element in set $X$ after $T$, where $X$ is the set defined through line $3$ in \hyperref[AddEdges]{\textsc{AddEdges}}($i, j$). \\\\
    This means that $x \geq t$, since $t \in X$ as well. Thus \[a^{(j)}_s \succ a^{(i)}_x \succeq a^{(i)}_t\]
    contradicting $\left(a^{(j)}_s, a^{(i)}_t\right) \in \vv{G}$. Therefore $T \leq t$, and we have the path \[u = a^{(j)}_s \mapsto a^{(i)}_T \mapsto a^{(i)}_{T+1} \dots \mapsto a^{(i)}_t = v\]
    with all edges in $A$.
\end{enumerate}
\end{proof}

\noindent
\hyperref[lem:3:1]{Lemma 3.1} and the preceding discussion prove the correctness of \hyperref[ColorSolve]{\textsc{ColorSolve}}.
\subsection{Probe complexity analysis}
In this section, we establish that \hyperref[ColorSolve]{\textsc{ColorSolve}} makes $O(n \log n + nk)$ calls to the oracle function \textsc{probe}. \\\\
Probes are either made through calls to \textsc{MergeSort} or \hyperref[AddEdges]{\textsc{AddEdges}}. 
\begin{lemma}
\label{lem:3:2}
The total number of probes made by calls to \textsc{MergeSort} is $O(n \log n)$. 
\end{lemma}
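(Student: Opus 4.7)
The plan is to reduce this to the standard analysis of \textsc{MergeSort}, after first checking that \textsc{MergeSort} is even applicable inside each color class. Since $f$ is a proper coloring of $H$, no two vertices in the same color class $S_i$ are connected by a forbidden edge, i.e.\ every pair in $S_i$ is an edge of $G$ and hence comparable. So \textsc{MergeSort}$(S_i)$ only ever queries edges that are actually in $E$, and can be executed as written.

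Next I would invoke the textbook bound that \textsc{MergeSort} on a list of size $s$ uses at most $O(s \log s)$ comparisons. Applied to each color class, this gives probe cost $O(|S_i| \log |S_i|)$ for the $i$-th call.

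Finally, I would sum over all color classes. Since the sets $S_1, \ldots, S_k$ partition $V$, we have $\sum_{i=1}^{k} |S_i| = n$, and $\log |S_i| \le \log n$ for every $i$. Therefore the total number of probes made by all \textsc{MergeSort} calls is at most
\[
\sum_{i=1}^{k} O(|S_i| \log |S_i|) \;\le\; O(\log n) \sum_{i=1}^{k} |S_i| \;=\; O(n \log n),
\]
which is the claimed bound.

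There is essentially no obstacle here; the only thing worth flagging explicitly is the applicability check in the first step, since the correctness of using \textsc{MergeSort} depends on the coloring property of $f$ on $H$. Once that is noted, the bound is just a convexity-style use of $\log |S_i| \le \log n$ combined with the partition identity.
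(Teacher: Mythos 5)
Your argument is correct and matches the paper's proof: bound each call by $O(|S_i|\log|S_i|) = O(|S_i|\log n)$ and sum using $\sum_i |S_i| = n$. The applicability check (that a proper coloring of $H$ makes every pair within a color class comparable) is a nice extra observation that the paper leaves implicit, but the complexity argument itself is the same.
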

\begin{proof}
The number of probes made by \textsc{MergeSort}($S_i$) is $O(|S_i| \log |S_i|)$ which is also $O(|S_i| \log n)$. Adding over all $i$ and realising that $\sum \limits_{i} |S_i| = n$, we see that the total number of probes made due to calls to $\textsc{MergeSort}$ is $O(n \log n).$
\end{proof}
\begin{lemma}
\label{lem:3:3}
The total number of probes made by \hyperref[AddEdges]{\textsc{AddEdges}}($i, j$) is $O( |S_i| + |S_j|)$.
\end{lemma}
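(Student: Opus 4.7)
The plan is to prove this by a standard amortization argument on the two-pointer movements, classifying each probe as either ``terminating'' (causing a \textbf{break}) or ``non-terminating'' (causing an update of $R$), and bounding each type separately.

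First I would observe that the pointer $R$ is initialized to $|S_i|+1$ in line 1, is never reset inside the outer loop, and is only modified by the assignment $R \gets x$ in line 8, where $x$ is strictly less than the current value of $R$ (because $X$ is defined as $\{1 \leq x < R : \dots\}$). Hence $R$ is monotonically non-increasing across the entire execution, and since it can never drop below $1$, its total decrease summed over the run is at most $|S_i|$.

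Next I would categorize the probes. Within the inner loop, each probe either (i) returns $1$, triggering the \textbf{break} and ending the current iteration of $L$, or (ii) returns $0$, in which case $R$ is strictly decreased (via $R \gets x$). Type-(i) probes occur at most once per outer iteration, so their total count is at most $|S_j|$. Type-(ii) probes each witness a strict decrement of $R$, so by the monotonicity bound above, their total count is at most $|S_i|$. Summing gives a probe count of at most $|S_i| + |S_j|$, which is the desired $O(|S_i| + |S_j|)$ bound.

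The only mild subtlety, rather than a genuine obstacle, is confirming that the set $X$ recomputed at each outer iteration respects the current value of $R$ so that the inner loop cannot probe $a^{(i)}_x$ for any $x \geq R$; this is immediate from the definition of $X$ in line 3, and is what guarantees that every non-terminating probe indeed strictly decreases $R$. With that observation, the amortization argument is clean and the bound follows directly.
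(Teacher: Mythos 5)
Your proof is correct and is essentially the paper's argument: the paper bounds the same two probe types simultaneously via the single potential $L+R$ (which decreases by at least $1$ per probe and is at most $|S_i|+|S_j|+1$), whereas you bound the at-most-$|S_j|$ break-probes and the at-most-$|S_i|$ $R$-decrementing probes separately — the same amortization, just split into two cases.
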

\begin{proof}
We show that after each call to \textsc{probe} through line 5 in \hyperref[AddEdges]{\textsc{AddEdges}}, either of $L$ or $R$ strictly decreases. Since $L+R \leq |S_i|+|S_j|+1$, this will prove the Lemma. \\\\
If \textsc{probe}$\left(a^{(j)}_{L}, a^{(i)}_x\right) = 1$, then we break and $L$ reduces by one since the iteration is finished for the current $L$. On the other hand, if \textsc{probe}$\left(a^{(j)}_{L}, a^{(i)}_x\right) = 0$, then $R$ is replaced with $x$, strictly reducing $R$. \\\\
Thus, each call to \textsc{probe} reduces $L+R$ by at least $1$, proving the lemma.
\end{proof}
\begin{corollary} \label{cor:3:4}
The total number of probes made by calls to \hyperref[AddEdges]{\textsc{AddEdges}} is $O(nk)$.
\end{corollary}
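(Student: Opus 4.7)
\noindent
\textbf{Proof plan for Corollary \ref{cor:3:4}.} The plan is to sum the bound of Lemma \ref{lem:3:3} over all ordered pairs $(i,j)$ with $i \neq j$ on which \hyperref[AddEdges]{\textsc{AddEdges}} is invoked by \hyperref[ColorSolve]{\textsc{ColorSolve}}, and then exploit that the color classes partition $V$ to collapse the double sum into something linear in $n$ and $k$.

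First I would write the total probe count as $\sum_{i \neq j} O(|S_i| + |S_j|)$, using Lemma \ref{lem:3:3} termwise. By symmetry of the expression under swapping $i$ and $j$, this is $O\!\left(\sum_{i \neq j} |S_i|\right)$. Next, for each fixed $i$ there are exactly $k-1$ choices of $j \neq i$, so the inner sum contributes $(k-1)|S_i|$. Summing over $i$ and using $\sum_{i=1}^{k} |S_i| = |V| = n$ (since the $S_i$ partition $V$), the total becomes $O((k-1)\cdot n) = O(nk)$, which is the desired bound.

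There is essentially no obstacle here: the argument is a two-line algebraic consolidation of Lemma \ref{lem:3:3}, and no new combinatorial input is required beyond the observation that the color classes partition $V$. The only thing to be mildly careful about is not to double-count by forgetting that \hyperref[AddEdges]{\textsc{AddEdges}} is called on \emph{ordered} pairs (as emphasized in the algorithm's description), but since the bound in Lemma \ref{lem:3:3} is symmetric in $i$ and $j$, this at most contributes a constant factor and does not affect the $O(nk)$ asymptotic.
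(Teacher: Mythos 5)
Your proposal is correct and follows essentially the same route as the paper: sum the $O(|S_i|+|S_j|)$ bound of Lemma \ref{lem:3:3} over all ordered pairs and use $\sum_i |S_i| = n$ to get $O(nk)$. Your symmetry observation just streamlines the same algebra the paper carries out termwise.
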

\begin{proof}
From \hyperref[lem:3:3]{Lemma 3.3}, the total number of probes made over all pairs is equal to \[\sum_{1 \leq i \leq k} \sum_{1 \leq j \leq k, j \ne i} O(|S_i|+|S_j|) \leq \sum_{1 \leq i \leq k} \left[O(k|S_i|) + \sum_{1 \leq j \leq k} O(|S_j|)\right] \leq \sum_{1 \leq i \leq k} \left[O(k|S_i|) + O(n)\right] = O(nk)\]
where we use $n = \sum \limits_{i} |S_i|$ in the last two estimates.
\end{proof}

\noindent Combining \hyperref[lem:3:2]{Lemma 3.2} and \hyperref[cor:3:4]{Corollary 3.4}, we see that the total number of probes made by \hyperref[ColorSolve]{\textsc{ColorSolve}} is $O(n \log n + nk)$, as claimed.
\section{Through $\omega(H)$}
\label{sec:4}
\subsection{High-level description}
We will set $k = \omega(H)+1$, so that $H$ has no cliques of size $k$. In other words, among any $k$ vertices, some two will be comparable. The main idea is that under this constraint, one can always find a good ``pivot'' given many vertices from the graph, via the algorithm \hyperref[Pivot]{\textsc{Pivot}}. \\\\
The next idea is to add vertices $u \in V$ one by one and look at the set of neighbours of $u$ in $G$ that have already been processsed. We find a good pivot for this set, and compare $u$ with this pivot to determine the direction of many edges containing $u$ with a single probe. Recursively repeating this algorithm until all edges are extinguished lets us determine all edges containing $u$ in a few probes. This is done using the algorithm \hyperref[DirectEdges]{\textsc{DirectEdges}}.\\\\
Combining these techniques, we are able to determine $\vv{G}$ in $O(nk \log n)$ probes.

\subsection{The Algorithm}
\hyperref[Pivot]{\textsc{Pivot}} essentially finds $u, S_{-}, S_{+}$ as described in the following lemma (the correctness is proven in \hyperref[sec:4:3]{Section 4.3}):
\begin{lemma}
\label{lem:4:1}
Let $G$ be a directed acyclic graph on $n > 10k$ vertices so that every induced subgraph of $G$ on $k$ vertices has at least one edge. Then, we can find a vertex $u$ and sets $S_{-}, S_{+}$ such that \begin{itemize}
    \item For all $x \in S_{-}$ ($x \in S_{+}$), we have $x \prec u$ ($u \prec x$).
    \item $|S_{-}|, |S_{+}| \geq \frac{n}{3k}$.
    \end{itemize}
\end{lemma}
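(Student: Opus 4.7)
The plan is to exhibit a long directed path inside $G$ and take its middle vertex as the pivot $u$. The argument rests on a standard layering trick combined with the hypothesis that every $k$-vertex induced subgraph contains an edge.

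First I would introduce the height function $h : V \to \mathbb{Z}_{\ge 0}$ with $h(v)$ equal to the length of a longest directed path in $G$ ending at $v$, and partition $V$ into the level sets $L_i = h^{-1}(i)$. Each $L_i$ is edge-free in $G$: if $u, w \in L_i$ were joined by a directed edge $u \to w$, then $h(w) \ge h(u) + 1 > i$, a contradiction. Combined with the hypothesis on $k$-vertex subgraphs, this forces $|L_i| \le k-1$ for every $i$.

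Setting $H = \max_v h(v)$ and summing $n = \sum_{i=0}^{H} |L_i|$ yields $H + 1 \ge n/(k-1)$. Walking backwards from any vertex achieving height $H$, at each step along a predecessor whose height is exactly one smaller, produces a genuine directed path $v_0 \to v_1 \to \cdots \to v_H$ in $G$. I would then pick $u := v_{\lfloor H/2 \rfloor}$ and set $S_- := \{v_0, \ldots, v_{\lfloor H/2 \rfloor - 1}\}$ and $S_+ := \{v_{\lfloor H/2 \rfloor + 1}, \ldots, v_H\}$; the relations $x \prec u$ on $S_-$ and $u \prec x$ on $S_+$ are immediate because these vertices lie along the path on either side of $u$.

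The only remaining piece, which I expect to be the main but entirely routine sticking point, is the arithmetic converting the bound $H \ge n/(k-1) - 1$ into $|S_-|, |S_+| \ge n/(3k)$. Using $|S_-|, |S_+| \ge \lfloor H/2 \rfloor \ge (H-1)/2$, it suffices to verify $n/(k-1) - 2 \ge 2n/(3k)$, which simplifies to $n \ge 6k(k-1)/(k+2)$; the hypothesis $n > 10k$ clears this comfortably for every $k \ge 2$, so the proof closes. One subtlety worth flagging is that the lemma as stated is only an existence result: turning the height-function argument into an efficient procedure (which \textsc{Pivot} presumably does) is a separate concern, so in the plan above I do not worry about constructing $h$ or the long path algorithmically.
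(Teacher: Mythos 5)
Your proof is correct, but it takes a genuinely different route from the paper's. The paper derives Lemma~4.1 from Lemma~4.2 via the procedure \textsc{Select}: it maintains a partition of the active vertices into rooted trees, repeatedly merges two trees whose roots are comparable, and observes that when no merge is possible the surviving roots form an independent set, hence number fewer than $k$; deleting these roots for $\lceil n/(3k)\rceil$ rounds removes fewer than $n/2$ vertices, so more than half the vertices accumulate a set $S_{+}(\cdot)$ of size at least $n/(3k)$, and running the same procedure on the reversed graph and intersecting the two surviving sets yields the pivot. You instead use a Mirsky-type layering by the height function: each level set is independent, hence has at most $k-1$ vertices, which forces a directed path on at least $n/(k-1)$ vertices, and the midpoint of that path is the pivot with $S_{-}$ and $S_{+}$ the two halves. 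The arithmetic you flag does close: $\lfloor H/2\rfloor \ge n/(2(k-1)) - 1 \ge n/(3k)$ reduces to $n \ge 6k(k-1)/(k+2)$, and since $6k(k-1)/(k+2) < 6k < 10k < n$ this holds (the case $k=1$ being vacuous). Your version is shorter and more elementary, and your closing caveat about constructiveness is a non-issue here: \textsc{Pivot} is only ever invoked on a subgraph whose edge directions are already known, so both constructions cost zero probes and the height function is computable by a topological sort. What the paper's heavier machinery buys is the running-time refinement sketched at the end of Section~4.4, where \textsc{Select} is reorganized into a binomial-heap-like structure running in $O(|S|\log|S|)$ time, whereas computing longest paths costs time linear in the number of known edges of $G[S]$, which can be quadratic in $|S|$.
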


\noindent
Before we describe \hyperref[Pivot]{\textsc{Pivot}}, we will discuss its main subprocedure \hyperref[Select]{\textsc{Select}}($G$), which finds $X$, a set of vertices of $G$ as in the following Lemma: 
\begin{lemma}
\label{lem:4:2}
Let $G$ be a directed acyclic graph on $n > 10k$ vertices so that every induced subgraph of $G$ on $k$ vertices has at least one edge. Then, we can find a set $X \subseteq V$ and sets $S_{+}(u)$ ($S_{-}(u)$) for all $u \in V$ such that the following is true: \begin{itemize}
\item $|X| > \frac{n}{2}$ and for all $u \in X$, we have $|S_{+}(u)| \geq \frac{n}{3k}$ ($|S_{-}(u)| \geq \frac{n}{3k}$).
\item For all $u \in V$ and $x \in S_{+}(u)$ ($x \in S_{-}(u)$), we have $u \prec x$ ($x \prec u$).
\end{itemize}
\end{lemma}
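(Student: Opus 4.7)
The plan is to apply Dilworth's theorem to the poset induced by $G$, partition $V$ into a small number of chains, and take $X$ to be the set of vertices having many later elements in their own chain. The key observation is that the hypothesis ``every induced $k$-subset of $G$ has an edge'' forces every antichain of $(V,\prec)$ to have size at most $k-1$: any pair of incomparable vertices has no edge between them in $G$ (since comparability would be witnessed already by an edge or the transitive closure), so antichains are independent sets of the underlying undirected $G$, and these have size at most $k-1$ by assumption.

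Given this, Dilworth's theorem supplies a partition of $V$ into chains $C_1, \ldots, C_p$ with $p \leq k-1$. This decomposition is algorithmic in polynomial time via bipartite matching on the transitive closure of $G$, which settles the constructive aspect. For each $u \in V$, I would then set $S_{+}(u)$ to be the set of elements strictly following $u$ in its chain; this automatically satisfies $u \prec x$ for every $x \in S_{+}(u)$, since chains are totally ordered by $\prec$. Setting $m = \lceil n/(3k)\rceil$ and $X := \{u \in V : |S_{+}(u)| \geq m\}$, a vertex in $C_i$ lies in $X$ exactly when it occupies one of the first $|C_i| - m$ positions of the chain, so $|X| \geq \sum_i \max(0, |C_i|-m) \geq n - (k-1)m \geq n - (k-1)\bigl(n/(3k) + 1\bigr) \geq 2n/3 - (k-1)$, which exceeds $n/2$ whenever $n > 6(k-1)$, in particular under the hypothesis $n > 10k$. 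The variant with $S_{-}(u)$ follows from the same argument applied to the reverse DAG $G^{\mathrm{op}}$.

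I do not expect a substantive obstacle here: the lemma reduces to a routine pigeonhole application once Dilworth's theorem is invoked. The only step that requires mild care is verifying that the $n > 10k$ slack comfortably absorbs the additive $(k-1)$ loss coming from the ceiling in $m$ as well as the $(k-1)$ term from the number of chains, and the calculation above shows this is immediate.
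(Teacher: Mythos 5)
Your argument is correct, but it proves the lemma by a genuinely different route than the paper. The paper's proof is the procedure \textsc{Select} itself: it runs $R=\lceil n/(3k)\rceil$ rounds, in each round merging rooted trees (linking the smaller root under the larger along known edges of $G$) until the surviving roots are pairwise non-adjacent in $G$ and hence number fewer than $k$ by the no-empty-$k$-subgraph hypothesis; each remaining active vertex then records its current root into $S_{+}(u)$ and the roots are deleted, so $|S_{+}(u)|$ grows by one per round while $|Z|$ drops by at most $k-1$ per round, giving $|S_{+}(u)|\ge R$ and $|Z|>n/2$ at the end. You instead observe that antichains of the reachability poset of the DAG are independent sets of $G$ and so have size at most $k-1$, apply Dilworth's theorem to partition $V$ into at most $k-1$ chains, and let $S_{+}(u)$ be the chain-successors of $u$ (transitivity of $\prec$ gives $u\prec x$ for reachable $x$, exactly as the paper relies on transitivity along tree paths); your count $|X|\ge n-(k-1)\lceil n/(3k)\rceil>n/2$ is the same pigeonhole the paper performs. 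Both proofs turn on the single fact that $G$ has no independent set of size $k$. What the paper's version buys is a self-contained elementary procedure that it later refines (a binomial-heap-style \textsc{Select}) to run in $O(|S|\log|S|)$ time, which is what supports the claimed $O(mk\log n+n^2)$ overall running time; a Dilworth decomposition via bipartite matching on the transitive closure is polynomial but slower. This difference is immaterial to probe complexity, since neither construction makes any probes ($G[S]$ is already fully oriented when the pivot is sought). Your version is shorter and conceptually cleaner.
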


\begin{center}
\begin{algorithm}[H]
\label{Select}
\caption{\hyperref[Select]{\textsc{Select}($G$)}}
\SetAlgoLined
\KwIn{$G(V, E), k$ \tcp{$|V| > 10k$, and $G$ has no empty induced subgraph on $k$ vertices}}
$n \gets |V|$\;
$\mathcal{T} \gets \{\}$\;
\For{$u \in V$}{
    $S_{+}(u) \gets \{\}$ \tcp{Initialise sets $S_{+}(u)$}
    Add $\{u\}$ to $\mathcal{T}$ \tcp{Add $1$-vertex tree $\{u\}$ to $\mathcal{T}$}
}
$R \gets \lceil \frac{n}{3k} \rceil$\;
$Z \gets V$\;
\For{$i \gets 1$ \KwTo $R$}{
    \While{$ \exists ~ T_1, T_2 \in \mathcal{T}$ such that $(\text{root}(T_1), \text{root}(T_2)) \in G$}{
        $\mathcal{T} \gets \mathcal{T} \setminus \{T_1, T_2\}$ \tcp{remove the two ``mergeable'' trees temporarily}
        \If{$\text{root}(T_1) \prec \text{root}(T_2)$}{
            Parent of $\text{root}(T_1) \gets \text{root}(T_2)$ \tcp{link the two trees as in \hyperref[select:a]{(a)}}
            Add $T_2$ to $\mathcal{T}$ \tcp{add the merged tree back to $\mathcal{T}$}
        }
        \Else{
            Parent of $\text{root}(T_2) \gets \text{root}(T_1)$\;
            Add $T_1$ to $\mathcal{T}$\;
        }
    }
    $S_r \gets \{\}$\;
    \For{tree $T \in \mathcal{T}$}{
        $r \gets \text{root}(T)$\;
        Add $r$ to $S_r$\;
        \For{$u \in T \setminus \{r\}$}{
            Add $r$ to $S_{+}(u)$\;
        }
    }
    $Z \gets Z \setminus S_r$\;
    Delete vertices in $S_r$ from $\mathcal{T}$ \tcp{Roots of all trees in $\mathcal{T}$ are deleted}
}
We now have the set of vertices $Z$ and sets $S_{+}(\cdot)$\;
\end{algorithm}
\end{center}

\noindent
We note that by calling $\hyperref[Select]{\textsc{Select}}($G$)$ on the graph obtained by reversing the edges of $G$, we can replace $S_+$ in the above with $S_{-}$ everywhere.\\\\
\hyperref[Select]{\textsc{Select}} proceeds by maintaining a set $Z$ of \textit{active} vertices, and a partition of $Z$ into rooted trees $\mathcal{T} = T_1 \sqcup T_2 \dots \sqcup T_m$, with each $T_i$ having edges directed towards its root. All the edges in these trees are also maintained to be edges in $G$. \\\\
Initially, $Z = V$, and $\mathcal{T}$ consists of $n$ 1-vertex trees, one for each vertex $u \in V$. Set $R = \lceil \frac{n}{3k} \rceil$. The algorithm proceeds for $R$ rounds. In each round, the following events take place: 
\begin{enumerate}
\item [(a)] \label{select:a} While there are two trees $T_1, T_2 \in \mathcal{T}$ such that there is an edge between $\text{root}(T_1)$ and $\text{root}(T_2)$, we compare the two, and depending on whether $\text{root}(T_1) \prec \text{root}(T_2)$ or $\text{root}(T_2) \prec \text{root}(T_1)$ we link $\text{root}(T_1)$ as a child of $\text{root}(T_2)$ or vice versa, merging the two trees into a bigger tree.
\item [(b)] If no two trees may be merged as in (a), we look at the set of roots of the trees in $\mathcal T$. For all $T_i \in \mathcal T$, if $r_i$ is the root of tree $T_i$, we add $r_i$ to $S_{+}(u)$ for all $u \in T_i \setminus \{r_i\}$. Then all vertices $r_1, r_2, \dots, r_m$ are removed from $Z$ and their trees, rendering them inactive. Note that this step leads to some of the trees in $T_i$ splitting into multiple smaller trees, which might be merged again as detailed in (a).
\end{enumerate}
After the rounds are over, the set of alive vertices $Z$ is returned as $X$, alongisde the desired sets $S_{+}(u)$.

\noindent
The algorithm for \hyperref[Pivot]{\textsc{Pivot}} can now be developed using the procedure \hyperref[Select]{\textsc{Select}}. We call \hyperref[Select]{\textsc{Select}} on our original graph $G$ and $G$ reversed, thereby obtaining sets $X_1, X_2$ and collections $S_{+}(\cdot), S_{-}(\cdot)$ satisfying the following properties (as will be proven in \hyperref[sec:4:3]{Section 4.3}): 
\begin{itemize}
    \item $|X_1|, |X_2| > \frac{n}{2}$.
    \item $u \prec x$ for all $x \in S_{+}(u)$ and $x \prec u$ for all $x \in S_{-}(u)$.
    \item $|S_{+}(u)| \geq \frac{n}{3k}$ for all $u \in X_1$.
    \item $|S_{-}(u)| \geq \frac{n}{3k}$ for all $u \in X_2$.
\end{itemize}
Since each of $X_1$ and $X_2$ have more than half the vertices of $G$, they must have at least one vertex in common, say $u$. Now \hyperref[Pivot]{\textsc{Pivot}} may simply return $u, S_{-}(u)$ and $S_{+}(u)$.
\begin{center}
\begin{algorithm}[H]
\label{Pivot}
\caption{\hyperref[Pivot]{\textsc{Pivot}}($G$)}
\KwIn{$G(V, E), k$ \tcp{$|V| > 10k$, and $G$ has no empty induced subgraph on $k$ vertices}}
$Z_1, S_{+}(\cdot) \gets \hyperref[Select]{\textsc{Select}}(G)$\;
$G' \gets$ reversed $G$ \tcp{$G'$ is the graph $G$ with all edges reversed}
$Z_2, S_{-}(\cdot) \gets \hyperref[Select]{\textsc{Select}}(G')$\;
Pick $u$ from $Z_1 \cap Z_2$\;
$u, S_{+}(u), S_{-}(u)$ satisfy the desired properties\;
\end{algorithm}
\end{center}
\noindent
Next, we describe $\hyperref[DirectEdges]{\textsc{DirectEdges}}$. \\\\
$\hyperref[DirectEdges]{\textsc{DirectEdges}}$ takes as input a vertex $u$, a set $S$ of vertices that $u$ is connected to (in $G(V, E)$) such that the directions of all edges of $G$ induced by $S$ are known. It then discovers the directions of edges $(u, x)$ for all $x \in S$ and adds these edges to $\vv{G}$.\\\\
This is the procedure for \hyperref[DirectEdges]{\textsc{DirectEdges}}: 
\begin{enumerate}
    \item [(a)] If $|S| \leq 10k$, we make $|S|$ probes via $\textsc{probe}(u, x)$ for all $x \in S$, thereby determining all the directions.
    \item [(b)] When $|S| > 10k$, we find a vertex $p$ and sets $S_{+}, S_{-}$ as in \hyperref[lem:4:1]{Lemma 4.1} by calling \hyperref[Pivot]{\textsc{Pivot}}($G[S]$), where $G[S]$ denotes the subgraph of $G$ induced by $S$. 
    \item [(c)] We probe the edge $(u, p)$. If $u \prec p$, then edges $(u, x)$ are added to $\vv{G}$ for all $x \in S_{+}$, and if $u \succ p$, then edges $(x, u)$ are added to $\vv{G}$ for all $x \in S_{-}$.
    \item [(d)] Depending on whether $u \prec p$ or $u \succ p$, we recurse to either $\hyperref[DirectEdges]{\textsc{DirectEdges}}(u, S \setminus (S_{+} \cup \{p\}))$ or recurse to $\hyperref[DirectEdges]{\textsc{DirectEdges}}(u, S \setminus (S_{-} \cup \{p\}))$. 
\end{enumerate}
Here is the pseudocode for $\hyperref[DirectEdges]{\textsc{DirectEdges}}$.
\begin{center}
\begin{algorithm}[H]
\label{DirectEdges}
\caption{\hyperref[DirectEdges]{\textsc{DirectEdges}}($u$, $S$)}
\SetAlgoLined
\While{$|S| > 10k$}{
    $p, S_{+}, S_{-} \gets \hyperref[Pivot]{\textsc{Pivot}}(G[S])$ \tcp{find a good pivot as in Lemma 4.1}
    $S \gets S \setminus \{p\}$\;
    \If{\textsc{probe}$(u, p) = 1$}{
        add $(u, p)$ to $\vv{G}$\;
        \For{$x \in S_{+}$}{
            add $(u, x)$ to $\vv{G}$\;
        }
        $S \gets S \setminus S_{+}$\;
    }
    \Else{
        add $(p, u)$ to $\vv{G}$\;
        \For{$x \in S_{-}$}{
            add $(x, u)$ to $\vv{G}$\;
        }
        $S \gets S \setminus S_{-}$\;
    }
}
\For{$x \in S$}{
    \If{\textsc{probe}(u, x) = 1}{
        add $(u, x)$ to $\vv{G}$\;
    }
    \Else{
        add $(x, u)$ to $\vv{G}$\;
    }
}
Directions of all edges $(u, x)$ for $x \in S$ have been determined and added to $\vv{G}$.\;
\end{algorithm}
\end{center}
\noindent
With the above, we are ready to describe $\hyperref[CliqueSolve]{\textsc{CliqueSolve}}$:
\begin{enumerate}
    \item [(a)] We process vertices one by one, and maintain that we know all the edges in $\vv{G}[P]$, where $P$ is the set of processed vertices.
    \item [(b)] When we are processing $u$, we look at $u$'s neighbours in $G(V, E)$ that have already been processed. Let the set of these vertices be $S$. We call $\hyperref[DirectEdges]{\textsc{DirectEdges}}(u, S)$ and determine the directions of all edges $(u, x)$ with $x \in S$.
    \item [(c)] By the time we have processed all the vertices, we have all the edges in $\vv{G}$.
\end{enumerate}
\begin{algorithm}[H]
\caption{\hyperref[CliqueSolve]{\textsc{CliqueSolve}}}
\label{CliqueSolve}
\SetAlgoLined
\KwIn{$G(V, E), k$}
$\vv{K} \gets $ Graph$\{V, \{\}\}$ \tcp{Initialise an empty directed graph on $V$}
$P \gets \{\}$\;
\For{$u \in V$}{
    $S \gets \{v \in P \mid (u, v) \in G\}$\;
    \hyperref[DirectEdges]{\textsc{DirectEdges}}$(u, S)$\;
    $P \gets P \cup \{u\}$ \tcp{add $u$ to the set of processed vertices}
}
$\vv{K}$ is now the same as $\vv{G}$.\;
\end{algorithm}

\subsection{Correctness}
\label{sec:4:3}
In this section we establish that \hyperref[CliqueSolve]{\textsc{CliqueSolve}} correctly determines $\vv{G}$. \\\\
We first prove \hyperref[lem:4:2]{Lemma 4.2}, and verify that \hyperref[Select]{\textsc{Select}} indeed returns a set $X$ and a collection $S_{+}(\cdot)$ as described in the Lemma. \\\\
Note that after each round, the size of $S_{+}(u)$ for any $u \in Z$ increases by exactly one, since the root of $u$ during the start of the round in question is added to $S_{+}(u)$. In particular, by the end, for all $u \in Z$ we have $|S_{+}(u)| \geq R \geq \frac{n}{3k}$. \\\\
Thus, it suffices to show that $|Z| > \frac{n}{2}$ by the end of the $R$ rounds. 
\begin{claim}
When the while loop of \hyperref[Select]{\textsc{Select}} terminates on line 20, there are less than $k$ trees in $\mathcal T$. In particular, less than $k$ vertices are deleted from $Z$ through line $29$.
\end{claim}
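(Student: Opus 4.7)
The plan is to exploit the termination condition of the while loop on line 9 together with the hypothesis on $G$ from \hyperref[lem:4:2]{Lemma 4.2}. When the loop exits, there is no ordered pair of distinct trees $T_1, T_2 \in \mathcal{T}$ with $(\text{root}(T_1), \text{root}(T_2)) \in G$; since the existential quantifier in the loop ranges over both orderings of the pair, this means no edge of $G$ (in either direction) joins two roots of trees currently in $\mathcal{T}$. Equivalently, the set of roots forms an independent set in the underlying undirected graph of $G$.

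Next, I would invoke the hypothesis of \hyperref[lem:4:2]{Lemma 4.2}: every induced subgraph of $G$ on $k$ vertices contains at least one edge, which is exactly the statement that the independence number of the underlying graph of $G$ is strictly less than $k$. Combined with the observation above, this forces $|\mathcal{T}| < k$ at the moment the while loop terminates. The ``in particular'' clause is then immediate from the pseudocode: the set $S_r$ is populated with exactly the roots of the trees currently in $\mathcal{T}$ (lines 21--24), so $|S_r| = |\mathcal{T}| < k$, and hence line 29 removes fewer than $k$ vertices from $Z$.

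There is no serious obstacle here; the argument is essentially a one-line application of the loop invariant ``no root--root edges'' plus the Ramsey-type hypothesis on $G$. The only subtlety worth flagging in the write-up is the bridge from the directed-edge notation used in the while-loop condition to the undirected notion of independence needed to apply the hypothesis, which is resolved by noting that the loop condition is symmetric in $T_1$ and $T_2$.
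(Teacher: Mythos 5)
Your proof is correct and follows essentially the same route as the paper: the paper observes that the roots of the surviving trees are pairwise incomparable and hence form a clique in $H$ of size less than $k$, which is exactly your statement that they form an independent set in $G$ and that the hypothesis bounds the independence number by $k-1$. The only cosmetic difference is that the paper also remarks that the loop terminates (since $|\mathcal{T}|$ drops each iteration), a point your write-up takes as given.
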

\begin{proof}
We note that the while loop must necessarily terminate, since the number of trees in $\mathcal T$ reduces by one every iteration. Moreover, when the process terminates, the set of roots of the trees must form a clique in $H$, since every two of the roots are incomparable. Since cliques in $H$ have size less than $k$, it follows that there are less than $k$ trees in $\mathcal{T}$. 
\end{proof}

\noindent
With the above claim, $|Z|$ reduces by at most $k$ every round, and thus after $R$ rounds, the final set $Z$ satisfies: \[|Z| \geq |V| - k R = n - k \left\lceil \frac{n}{3k} \right \rceil > n - k \left(\frac{n}{3k} + 1\right) = \frac{2n}{3} - k > \frac{n}{2},\]
where the last estimate follows from $n > 10k$. \\\\
Having proven the correctness of \hyperref[Select]{\textsc{Select}}, the correctness of \hyperref[Pivot]{\textsc{Pivot}} is immediate. As we have already seen, the two sets $Z_1, Z_2$ obtained each contain more than half the vertices of $G$, and thus have a common element, say $u$. Then $u$ satisfies $|S_{-}(u)|, |S_{+}(u)| \geq \frac{n}{3k}$ due to the two calls to \hyperref[Select]{\textsc{Select}}. \\\\
Next, we show that \hyperref[DirectEdges]{\textsc{DirectEdges}} correctly determines the directions of edges $(u, x)$ for all $x \in S$. \\\\
Due to symmetry, we assume that the condition in line $4$ of \hyperref[DirectEdges]{\textsc{DirectEdges}} is true. Then, the edge $(u, p)$ is correctly added to $\vv{G}$ since we know $u \prec p$ to be true. Moreover, edges $(u, x)$ for $x \in S_{+}$ are correctly determined as well since for any $x \in S_{+}$ we have \[u \prec p \prec x\] 
Thus all edges added by $\hyperref[DirectEdges]{\textsc{DirectEdges}}$ during the while loop in line $1$ are correctly oriented. This is also easily seen to be the case for edges added via the for loop on line $19$. Therefore \hyperref[DirectEdges]{\textsc{DirectEdges}} correctly determines directions of all the edges $(u, x)$ with $x \in S$. \\\\
Now note that \hyperref[CliqueSolve]{\textsc{CliqueSolve}} calls \hyperref[DirectEdges]{\textsc{DirectEdges}} for each edge $(u, v)$ of $G$ exactly once, depending on the order in which $u$ and $v$ are processed. Thus, \hyperref[CliqueSolve]{\textsc{CliqueSolve}} correctly determines $\vv{G}$.
\subsection{Probe complexity analysis}
\label{sec:4:4}
In this section, we estbalish that \hyperref[CliqueSolve]{\textsc{CliqueSolve}} makes $O(nk \log n)$ calls to the oracle function \textsc{probe}.\\\\
We note that \hyperref[CliqueSolve]{\textsc{CliqueSolve}} makes calls to \textsc{probe} only via the procedure \hyperref[DirectEdges]{\textsc{DirectEdges}}. And thus we show
\begin{lemma}
\label{lem:4:4}
    The procedure \hyperref[DirectEdges]{\textsc{DirectEdges}}$(u, S)$ makes $O\left(k \log |S|\right)$ probes.
\end{lemma}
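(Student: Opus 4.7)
My plan is to first pin down exactly where \textsc{probe} is invoked inside \hyperref[DirectEdges]{\textsc{DirectEdges}}, and then to bound that count via a geometric-shrinkage argument on $|S|$.

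The key preliminary observation is that \hyperref[Pivot]{\textsc{Pivot}} and its subroutine \hyperref[Select]{\textsc{Select}} are probe-free. They are invoked on $G[S]$ with $S \subseteq P$, and by the invariant maintained by \hyperref[CliqueSolve]{\textsc{CliqueSolve}} the orientations of all edges in $\vv{G}[P]$ are already known. Accordingly, comparisons such as ``$\mathrm{root}(T_1) \prec \mathrm{root}(T_2)$'' inside \hyperref[Select]{\textsc{Select}} amount to lookups in the known DAG $\vv{G}[S]$ rather than calls to the oracle. Consequently the only probes charged to \hyperref[DirectEdges]{\textsc{DirectEdges}}$(u, S)$ are the single probe in line 4 per iteration of the while loop and the probes of the concluding for loop on line 19.

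Next I would bound the number of while-loop iterations. By \hyperref[lem:4:1]{Lemma 4.1}, each call \hyperref[Pivot]{\textsc{Pivot}}$(G[S])$ returns $p, S_{+}, S_{-}$ with $|S_{+}|,|S_{-}| \geq |S|/(3k)$. A single iteration removes $\{p\}$ together with either $S_{+}$ or $S_{-}$ from $S$, so $|S|$ shrinks by a factor of at most $1 - 1/(3k)$. Writing $S_0$ for the initial size, after $t$ iterations we have $|S| \leq S_0\,(1 - 1/(3k))^t$. The loop halts once $|S| \leq 10k$, so using the elementary inequality $\log(1/(1-x)) \geq x$ for $x \in (0,1)$, the number of iterations is at most
\[
\frac{\log(S_0/(10k))}{\log(1/(1-1/(3k)))} \;\leq\; 3k\,\log(S_0/(10k)) \;=\; O(k \log |S|).
\]
Each such iteration contributes exactly one probe, and once the loop exits $|S| \leq 10k$, so the final for loop contributes at most $10k = O(k)$ additional probes; summing yields the claimed $O(k \log |S|)$ bound.

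The main conceptual hurdle is the bookkeeping observation that \hyperref[Pivot]{\textsc{Pivot}} itself does not probe the oracle; without it one would have to separately account for all the root-comparisons performed inside \hyperref[Select]{\textsc{Select}}. Once that is granted, the remainder is the standard one-sided geometric-decay calculation.
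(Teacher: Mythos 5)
Your proof is correct and follows essentially the same geometric-decay argument as the paper: one probe per while-loop iteration, each shrinking $|S|$ by a factor of $1-1/(3k)$, plus at most $10k$ probes in the final loop. Your explicit observation that \textsc{Pivot} and \textsc{Select} are probe-free (since they operate on the already-oriented $\vv{G}[S]$ with $S \subseteq P$) is a point the paper leaves implicit, and it is a worthwhile clarification.
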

\begin{proof}
    To prove this we will look at the decrease in the size of $S$ after every iteration of the while loop on line 1 in \hyperref[DirectEdges]{\textsc{DirectEdges}}. We see that we remove $p$ from $S$, as well as either $S_{+}$ or $S_{-}$.  Due to \hyperref[lem:4:1]{Lemma 4.1}, both sets have size at least $\dfrac{|S|}{3k}$. Thus, the size of $S$ decreases by a factor of at least $\eta = \left(1 - \dfrac{1}{3k}\right)$ after every probe made in the while loop. \\\\
    Further, when the while loop is exited, no more than $10k$ additional probes are made, since $|S|$ has at most $10k$ vertices by the time we exit. Thus, the total number of probes is no more than \[10k + \log_{(1/\eta)}(|S|) = 10k + \frac{\log |S|}{\log(1/\eta)}\]
    Now \[\log(1/\eta) = \log\left(1 + \frac{1}{3k-1}\right) = \frac{1}{3k-1} \log \left(\left(1+ \frac{1}{3k-1}\right)^{3k-1}\right) \geq \frac{\log 2}{3k-1} = \Omega\left(\frac{1}{k}\right)\]
    and thus the total number of probes is \[\leq 10k + O(k \log |S|) = O(k \log |S|),\]
    proving the lemma.
\end{proof}

\noindent
Observe that the procedure $\hyperref[DirectEdges]{\textsc{DirectEdges}}(u, S)$ is called exactly $n$ times, and since each time $|S| \leq n$, the total number of probes is $O(nk \log n)$, as claimed. \\\\
Here we remark that it is possible to improve the naive complexity of the above algorithm to a total complexity of $O(m \omega(H) \log n + n^2)$ by developing an $O(|S| \log |S|)$ algorithm to replace \hyperref[Select]{\textsc{Select}} by altering the Binomial Heap to allow up to $\omega(H)$ trees at the same level. We do this by ensuring that that each $T \in \mathcal{T}$ is a Binomial tree.  Here is a brief sketch of the algorithm:
\begin{itemize}
\item Instead of combining until the roots of trees in $\mathcal{T}$ form a clique in $H$, we combine only until the roots of trees in each of the levels form a clique in $H$. Using the size of $\mathcal{T}$ as a potential function, we can show that the amortised complexity of this step is $O(k \log n)$.
\item At a given stage, there are at most $k$ trees in a level. This means there are at most $k \log n$ roots, where $n = |S|$. We can repeatedly compare the roots while there are more than $k$ of them and find up to $k$ vertices to remove. This requires $O(k \log n)$ operations. Here we do not actually change $\mathcal{T}$ to reflect these comparisons, we only do it to shortlist the (up to) $k$ roots for deletion. 
\item Once we find the (up to) $k$ roots, we delete them all.
\item We perform this $O(n/k)$ times, and thus the total complexity for \hyperref[Select]{\textsc{Select}} is $O(n \log n)$.
\end{itemize}
Since \hyperref[Select]{\textsc{Select}} can be altered to run in $O(|S| \log |S|)$ time, it follows that \hyperref[DirectEdges]{\textsc{DirectEdges}} can now run in $O(k |S| \log |S|)$ time. Summing over the vertices, we see that the total complexity of \hyperref[CliqueSolve]{\textsc{CliqueSolve}} can be made $O(mk \log n + n^2)$.
\section{Concluding Remarks}
\label{sec:5}
In this paper, we consider the generalized sorting problem and provide two algorithms that perform well when the graph $H$ formed by the forbidden edges is ``locally small'', by considering the metrics $\chi(H)$ and $\omega(H)$. In particular, we find algorithms that make $O(n\log n + n\chi(H))$ and $O(n \omega(H)\log n)$ probes and determine all edges of the input graph $G(V, E)$. \\\\
We now briefly discuss the scenario when $G$ is randomly generated with probability $p$. It is well known that the independence number of $G$ is $O(p^{-1} \log n)$ with high probability in this case, and thus $\omega(H) = O(p^{-1} \log n)$ which means that, with high probability, using \hyperref[CliqueSolve]{\textsc{CliqueSolve}} will make $O(p^{-1} n \log^2n)$ comparisons to recover $\vec{G}$. In particular, when $p = \Omega(1)$, \hyperref[CliqueSolve]{\textsc{CliqueSolve}} sorts $G$ in $O(n\log^2n)$ probes. \\\\
Actually, since $|E(G)| = O(pn^2)$ with high probability as well, we see that depending on whether $p < \frac{\log n}{\sqrt{n}}$ or not, we can use a bruteforce or \hyperref[ColorSolve]{\textsc{ColorSolve}} to obtain a final complexity of $O(n^{3/2} \log n)$.
\section{Acknowledgements}
I express my deep gratitude to Prof. Amit Kumar for introducing me to the problem and for his constant guidance and insightful comments throughout the course of this work.
{\small
\bibliographystyle{alpha}
\bibliography{citations}
}
\end{document}